\documentclass[a4paper,USenglish,cleveref, autoref, thm-restate, pdfa]{lipics-v2021-modified}

\bibliographystyle{plainurl}

\title{Pathwidth of 2-Layer $k$-Planar Graphs}

\author{Yuto Okada}
{Nagoya University, Japan \and \url{https://yutookada.com/en/}}
{research@yutookada.com}
{https://orcid.org/0000-0002-1156-0383}
{}

\authorrunning{Y. Okada}
\Copyright{Yuto Okada}
\keywords{graph drawing, 2-layer $k$-planar graphs, pathwidth}

\funding{Supported by JST SPRING, Grant Number JPMJSP2125.}

\acknowledgements{I would like to thank Nicol\'as Honorato-Droguett for proofreading the manuscript. I would also like to thank the anonymous reviewers for very helpful suggestions.}

\nolinenumbers
\hideLIPIcs

\newcommand{\pw}{\mathrm{pw}}

\newcommand{\ns}{\mathrm{ns}}

\begin{document}

\maketitle

\begin{abstract}
    A bipartite graph $G = (X \cup Y, E)$ is a 2-layer $k$-planar graph if it admits a drawing on the plane such that the vertices in $X$ and $Y$ are placed on two parallel lines respectively, edges are drawn as straight-line segments, and every edge involves at most $k$ crossings.
    Angelini, Da Lozzo, F\"orster, and Schneck~[GD 2020; Comput. J., 2024] showed that every 2-layer $k$-planar graph has pathwidth at most $k + 1$.
    In this paper, we show that this bound is sharp by giving a 2-layer $k$-planar graph with pathwidth $k + 1$ for every $k \geq 0$.
    This improves their lower bound of $(k+3)/2$.
\end{abstract}

\section{Introduction}

A \emph{2-layer drawing} of a bipartite graph $G$ with bipartition $(X, Y)$ is a drawing on the plane obtained by placing the vertices in $X$ on a line (layer), placing the vertices in $Y$ on another parallel line (layer), and drawing the edges as straight-line segments.
This drawing style is not only a natural model for drawing bipartite graphs, but also has an application to \emph{layered drawing}, which is similarly defined, but may have many layers:
the Sugiyama method, a method for producing a layered drawing of a directed graph introduced by Sugiyama, Tagawa, and Toda~\cite{DBLP:journals/tsmc/SugiyamaTT81} employs the crossing minimization problem on the 2-layer model as a subroutine.

Due to their importance, many graph classes admitting good 2-layer (or $h$-layer) drawings have been introduced, and their recognition algorithms have been studied in the literature.
The crossing minimization problems for $2$-layer and $h$-layer drawings are both NP-complete~\cite{doi:10.1137/0604033, DBLP:journals/siamcomp/HeathR92}.
However, they admit FPT algorithms with respect to $h + c$, where $c$ is the minimum number of edge crossings~\cite{DBLP:journals/algorithmica/DujmovicFKLMNRRWW08}.
Angelini, Da Lozzo, F\"orster, and Schneck~\cite{TheUpperBoundGD2020,TheUpperBoundComputJ2024} initiated the study of \emph{2-layer $k$-planar graphs}, the graphs that admit a 2-layer drawing such that every edge involves at most $k$ crossings.
Kobayashi, Okada, and Wolff~\cite{kobayashi_et_al:LIPIcs.SoCG.2025.65} gave an XP algorithm for recognizing 2-layer $k$-planar graphs with respect to $k$, which yields a polynomial-time algorithm for every fixed $k$.
They also showed that the recognition problem is XNLP-hard and hence admits no FPT algorithm under a plausible assumption.
\emph{Fan-planar drawings} with $h$ layers have also been studied~\cite{DBLP:conf/mfcs/BiedlC0MNR20}.
In a fan-planar drawing, an edge can cross other edges any number of times while the edges crossed by a single edge have a common endpoint.
For recognizing 2-layer fan-planar graphs, linear-time algorithms are known for trees~\cite{DBLP:conf/mfcs/BiedlC0MNR20} and biconnected graphs~\cite{DBLP:journals/jgaa/BinucciCDGKKMT17}.
For general graphs, Kobayashi and Okada~\cite{2508.17349} recently gave a polynomial-time algorithm, by incorporating fan-planarity into the algorithm of~\cite{kobayashi_et_al:LIPIcs.SoCG.2025.65} for recognizing 2-layer $k$-planar graphs.

As layered drawings have linear shapes, those classes often have bounded pathwidth.
The class of bipartite graphs that admit a crossing-free 2-layer drawing is equivalent to the class of caterpillars, which have pathwidth at most 1.
More generally, the graphs admitting an $h$-layer drawing with $k$ edge crossings have pathwidth at most $h+2k-1$~\cite{DBLP:journals/algorithmica/DujmovicFKLMNRRWW08}.
Angelini, Da Lozzo, F\"orster, and Schneck~\cite{TheUpperBoundGD2020,TheUpperBoundComputJ2024} showed that 2-layer $k$-planar graphs have pathwidth at most $k+1$, for which they also gave a lower bound of $(k+3)/2$.
The authors in~\cite{DBLP:conf/mfcs/BiedlC0MNR20} showed that $h$-layer fan-planar graphs have pathwidth at most $2h-2$.
Recently, Wood~\cite{DBLP:journals/jgaa/Wood23} characterized the pathwidth-boundedness of bipartite graphs by the existence of a certain 2-layer drawing.

\subparagraph{Our results.}

In this paper, we consider the pathwidth of 2-layer $k$-planar graphs and show that the upper bound $k+1$ of~\cite{TheUpperBoundGD2020,TheUpperBoundComputJ2024} is sharp.
To this end, we give a 2-layer $k$-planar graph with pathwidth exactly $k+1$ for every $k \geq 0$, improving their lower bound $(k+3)/2$ of~\cite{TheUpperBoundGD2020,TheUpperBoundComputJ2024}.

\subparagraph{Related results.}

An outer $k$-planar drawing is a drawing such that the vertices are placed on a circle, the edges are straight-line segments, and every edge involves at most $k$ crossings.
\emph{Outer $k$-planar graphs}, the graphs that admit an outer $k$-planar drawing, are known to have treewidth at most $1.5k + 2$~\cite{DBLP:conf/gd/FirmanGKO024}, for which Pyzik~\cite{DBLP:conf/gd/Pyzik25} gave a lower bound of $1.5k + 0.5$.

\section{Preliminaries}

In this section, we give formal definitions for 2-layer $k$-planar graphs, pathwidth, and node searching number, which we use to give the lower bound of pathwidth, and some useful lemmas.
We follow the standard notations and terminology in graph theory (see, for example, \cite{diestel2025graph}).
For an integer $n \geq 1$, let $[n]$ denote the set $\{1, 2, \dots, n\}$.
For integers $n_{\ell} \leq n_r$, let $[n_{\ell}, n_r]$ denote the set $\{n_{\ell}, n_{\ell} + 1, \dots, n_r\}$.

\subparagraph{2-layer $k$-planar graphs.}

Since the proofs in this paper do not require the use of actual embeddings, we define 2-layer $k$-planar graphs combinatorially.
It can be easily confirmed that the definition below is equivalent to the (topological) one used in~\cite{TheUpperBoundGD2020,TheUpperBoundComputJ2024}.

Let $G = (X \cup Y, E)$ be a bipartite graph with bipartition $(X, Y)$.
Let $n_X = |X|$ and $n_Y = |Y|$.
Let $\pi_X \colon X \to [n_X]$, $\pi_Y \colon Y \to [n_Y]$ be bijections.
\emph{A 2-layer drawing} of $G$ is a pair of bijections $\pi = (\pi_X, \pi_Y)$.
On a 2-layer drawing $\pi$, an edge $\{x_1, y_1\} \in E$ \emph{crosses} an edge $\{x_2, y_2\} \in E$, where $x_1, x_2 \in X$ and $y_1, y_2 \in Y$, if and only if either one of $(\pi_X(x_1) < \pi_X(x_2)) \land (\pi_Y(y_1) > \pi_Y(y_2))$ or $(\pi_X(x_1) > \pi_X(x_2)) \land (\pi_Y(y_1) < \pi_Y(y_2))$ holds.
For an integer $k \geq 0$, a 2-layer drawing $\pi$ is \emph{a 2-layer $k$-planar drawing} if every edge in $E$ is involved in at most $k$ crossings on $\pi$.
The graph $G$ is \emph{a 2-layer $k$-planar graph} if it admits a 2-layer $k$-planar drawing.

\subparagraph{Pathwidth.}

Let $G = (V, E)$ be a graph.
A \emph{path decomposition} of $G$ is a pair of a path $P$ and a family of subsets $\mathcal{V} = (V_p)_{p \in V(P)}$ such that:
\begin{itemize}
    \item $V = \bigcup_p V_p$;
    \item for every edge $\{u, v\} \in E$, there exists $V_p \in \mathcal{V}$ such that $u, v \in V_p$; and
    \item for every vertex $v \in V$, the subgraph of $P$ induced by $\{p \in V(P) \mid v \in V_p\}$ is connected.
\end{itemize}
The \emph{width} of a path decomposition $(P, \mathcal{V})$ is defined as $\max_{p \in V(P)} |V_p| - 1$.
The \emph{pathwidth} of a graph $G$, denoted by $\pw(G)$, is the minimum width of a path decomposition of $G$.

\subparagraph{Node searching number.}

\emph{Node searching} is a one-player game played on a graph.
The edges are initially \emph{contaminated} and the goal is to \emph{clean} all the edges.
The possible moves in a turn are either placing or removing a \emph{guard} on a vertex.
A vertex is \emph{guarded} when a guard is placed on the vertex.
An edge becomes clean if the endpoints are both guarded.
An edge becomes contaminated if it shares a non-guarded endpoint with a contaminated edge.
We call this \emph{recontamination}.
After each turn, recontamination spreads as far as possible via non-guarded vertices.
A \emph{search strategy} is a sequence of moves from the initial configuration, where the edges are all contaminated and there is no guard, to a configuration where all the edges are clean.
The \emph{cost} of a search strategy is the maximum number of guards placed at the same time in the strategy.
For a graph $G$, the \emph{node searching number} of $G$, denoted by $\ns(G)$, is the minimum cost of a search strategy on $G$.
Kirousis and Papadimitriou~\cite{DBLP:journals/dm/KirousisP85} showed that $\ns(G)$ is identical to interval thickness, which is identical to pathwidth plus one~\cite[Theorem 29]{DBLP:journals/tcs/Bodlaender98}.
\begin{lemma}[\cite{DBLP:journals/tcs/Bodlaender98,DBLP:journals/dm/KirousisP85}]\label{lem:ns-pw}
    For every graph $G$, $\ns(G) = \pw(G) + 1$ holds.
\end{lemma}
It is known that allowing recontamination to happen does not help to decrease the number of guards required~\cite{DBLP:journals/tcs/KirousisP86,DBLP:journals/jacm/LaPaugh93}.
This allows us to consider only search strategies with no recontamination.
\begin{lemma}[\cite{DBLP:journals/tcs/KirousisP86,DBLP:journals/jacm/LaPaugh93}]\label{lem:no-recontamination}
    For every graph $G$, there exists a search strategy on $G$ with cost $\ns(G)$ that does not cause recontamination.
\end{lemma}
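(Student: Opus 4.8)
The plan is to sidestep the general (and technically delicate) strategy-transformation arguments behind the cited works, and instead construct a monotone search directly, exploiting the equivalence with pathwidth recorded in Lemma~\ref{lem:ns-pw}. Since $\ns(G) = \pw(G) + 1$, it suffices to exhibit a recontamination-free search strategy of cost at most $\pw(G) + 1$: any such strategy is in particular a valid search strategy, so its cost is at least $\ns(G)$, and the two bounds pin it down to exactly $\ns(G)$, as required. Crucially, this route is not circular, because the only ingredient I draw from the search--pathwidth equivalence is the existence of a good \emph{linear layout}, which is a purely structural fact about $G$.

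First I would pass from a pathwidth-optimal structure to such a layout. Writing $w = \pw(G)$, the identity $\mathrm{vs}(G) = \pw(G)$ between vertex separation number and pathwidth yields an ordering $v_1, \dots, v_n$ of $V(G)$ such that for every index $i$ the boundary set
\[
B_i = \{\, v_j : j \le i \text{ and } v_j \text{ has a neighbour } v_{j'} \text{ with } j' > i \,\}
\]
has size at most $w$. I would then sweep along this ordering: for $i = 1, \dots, n$, place a guard on $v_i$, and afterwards remove the guards from every vertex of $(B_{i-1} \cup \{v_i\}) \setminus B_i$, where $B_0 = \emptyset$. The number of guards present simultaneously never exceeds $|B_{i-1}| + 1 \le w + 1$, matching the target cost.

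The heart of the argument is an invariant maintained after step $i$: the guarded vertices are exactly $B_i$, and every edge with both endpoints in $\{v_1, \dots, v_i\}$ is clean. The key observation is that any neighbour $v_j$ with $j < i$ of $v_i$ must lie in $B_{i-1}$, since it has the neighbour $v_i$ to its right; hence it is already guarded at the moment $v_i$ is guarded, so placing the guard on $v_i$ cleans every edge from $v_i$ into the already-processed part and extends the invariant.

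I expect the main obstacle to be verifying that the guard removals are genuinely safe, i.e.\ trigger no recontamination. A guard is removed from $v_j$ only when $v_j \notin B_i$, meaning all neighbours of $v_j$ have index at most $i$; by the invariant every edge incident to $v_j$ is then already clean, so $v_j$ is incident to no contaminated edge and its unguarding cannot activate the recontamination rule. The delicate point is the simultaneous removal of several boundary vertices: I must confirm that unguarding one of them does not produce a contaminated edge that propagates to another. This follows because all edges touched by the removed guards lie inside the clean part $\{v_1, \dots, v_i\}$, but it is precisely the step where the contamination-propagation rule has to be applied with care. Checking the base case $i = 0$ and observing that the final configuration after step $n$ (where $B_n = \emptyset$ and all edges are clean) completes a valid recontamination-free search of cost at most $w + 1 = \ns(G)$ then finishes the proof.
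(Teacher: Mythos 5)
Your sweep construction itself is sound: from a layout witnessing $\mathrm{vs}(G)=\pw(G)$ it does produce a recontamination-free strategy with at most $\pw(G)+1$ guards, and the safety of the removals can be justified exactly as you sketch (any vertex incident to both a clean and a contaminated edge must lie in the current boundary $B_i$ and is therefore guarded). Note that the paper offers no proof of this lemma at all; it imports it from Kirousis--Papadimitriou and LaPaugh, whose proofs are precisely the strategy-transformation/normalization arguments you announce you will sidestep.

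The genuine gap is circularity, and your defence against it does not hold. From \cref{lem:ns-pw} you need not only the ``purely structural'' existence of a good layout (that only gives the upper bound $\pw(G)+1$ on the cost of your strategy), but also the inequality $\ns(G)\ge\pw(G)+1$, which is what you invoke to pin the cost down to exactly $\ns(G)$. That inequality is a statement about \emph{all} search strategies, including recontaminating ones, and it is essentially the lemma you are trying to prove: given your sweep (which shows the monotone search number is at most $\pw(G)+1$) and the standard extraction of a path decomposition of width $s-1$ from a monotone strategy with $s$ guards (which shows it is at least $\pw(G)+1$), the statements ``$\ns(G)\ge\pw(G)+1$'' and ``recontamination does not help'' are interderivable. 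In the cited literature the dependency runs in the direction opposite to yours: monotonicity is established first (by Kirousis--Papadimitriou's argument, later simplified by Bienstock--Seymour via submodularity), and the equality $\ns(G)=\pw(G)+1$ is deduced from it, since without monotonicity one cannot read off a path decomposition from an optimal strategy. So, unrolled, your proof assumes its own conclusion. What your construction genuinely establishes is the upper bound on the \emph{monotone} search number; the missing piece is a lower bound of $\pw(G)+1$ on the cost of arbitrary, possibly recontaminating, strategies, and that is exactly where the delicate work of the cited papers lives.
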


\section{Lower Bound}

In this section, we show our main result.

\begin{theorem}\label{thm:main}
    For every $k \geq 0$, there exists a 2-layer $k$-planar graph with pathwidth $k + 1$.
\end{theorem}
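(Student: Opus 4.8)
The plan is to prove only the lower bound: since the result of Angelini, Da Lozzo, F\"orster, and Schneck~\cite{TheUpperBoundGD2020,TheUpperBoundComputJ2024} gives $\pw(G) \le k+1$ for every 2-layer $k$-planar graph $G$, it suffices to exhibit, for each $k \ge 0$, a 2-layer $k$-planar graph $G_k$ with $\pw(G_k) \ge k+1$; the two inequalities then force equality. I would take each $G_k$ to be a tree, built recursively so that the pathwidth lower bound comes from the branching behaviour of pathwidth on trees, while the crossing budget is controlled by an explicit recursive 2-layer drawing. Concretely, let $G_0$ be a single edge, and let $G_k$ consist of a root $r_k$, a path (a \emph{stem}) from $r_k$ to a branch vertex $v_k$, and three further stems joining $v_k$ to the roots of three vertex-disjoint copies of $G_{k-1}$. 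The stems exist purely to create slack in the drawing and may be subdivided as needed; since each is a pendant path they do not affect the branching structure.

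For the lower bound I would argue through the node searching number, using $\ns(G)=\pw(G)+1$ (\cref{lem:ns-pw}) together with the no-recontamination normal form (\cref{lem:no-recontamination}). By induction $\pw(G_{k-1}) \ge k$, hence $\ns(G_{k-1}) \ge k+1$, and the same holds for each copy of $G_{k-1}$ extended by its stem. Deleting $v_k$ leaves three components each containing such a stemmed copy, each therefore with node searching number at least $k+1$. By the standard branching fact for node searching---a vertex whose deletion leaves three components each of node searching number at least $s$ forces node searching number at least $s+1$, since in a no-recontamination search one guard must hold $v_k$ to protect an already-cleaned branch while at least $s$ guards are spent finishing a second branch with the third still contaminated---we obtain $\ns(G_k) \ge k+2$, i.e. $\pw(G_k) \ge k+1$. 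The base case $\pw(G_0)=1$ is immediate.

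The heart of the construction is a recursive 2-layer drawing $D_k$ of $G_k$ with at most $k$ crossings per edge. I would maintain an \emph{interface invariant}: $D_k$ is confined to an interval on each layer, its root $r_k$ sits at the left end, and all of $D_k$ except a single \emph{exposed} stem edge incident to $r_k$ is nested strictly inside the horizontal span of that edge, so that a straight segment sweeping over the whole block from the left meets the block in a controlled way. To assemble $D_k$ from three copies of $D_{k-1}$, I place the three blocks side by side, put $v_k$ and $r_k$ just to their left, and draw the three connecting stems as a fan out of $v_k$, routed so that the stem reaching a given block crosses each block lying between it and $v_k$ only along that block's exposed stem. Subdividing the stems lets each fan edge meet each intervening block at a \emph{fresh} stem segment, so that every pre-existing edge receives at most one additional crossing at this level, and inductively each edge carries at most $k$ crossings in $D_k$.

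The main obstacle is precisely this crossing bookkeeping. I must order the three blocks and position $v_k$ so that no pre-existing edge is crossed by two distinct fan edges at the same level---otherwise a single edge's count could jump by two rather than one---while simultaneously keeping the fan edges themselves within budget and re-establishing the nesting invariant so that the composite block $D_k$ again exposes exactly one interface for use one level higher. The delicate part is to nest three two-layer blocks so that each presents a single exposed stem, and then to verify, by an inductive accounting of which edges lie on which side of each fan segment, that inherited crossings plus the one new crossing never exceed $k$ at level $k$. Once this drawing is established, $G_k$ is 2-layer $k$-planar, so $\pw(G_k) \le k+1$ by the upper bound of~\cite{TheUpperBoundGD2020,TheUpperBoundComputJ2024}; combined with $\pw(G_k) \ge k+1$ this yields $\pw(G_k)=k+1$ and proves \cref{thm:main}.
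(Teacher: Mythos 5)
Your overall frame is the same as the paper's: exhibit a graph that is 2-layer $k$-planar and has pathwidth at least $k+1$, then invoke the upper bound of \cite{TheUpperBoundGD2020,TheUpperBoundComputJ2024} to get equality. Your lower-bound side is fine: the branching lemma for node searching (three branches of pathwidth at least $k$ at a common vertex force pathwidth at least $k+1$) is standard, and with \cref{lem:ns-pw} and \cref{lem:no-recontamination} it gives $\pw(G_k)\geq k+1$ for your tree. The genuine gap is on the drawing side, and it is not merely deferred bookkeeping: the interface invariant your induction rests on is geometrically untenable. In a 2-layer drawing every edge is a segment joining the two layers, so if a block $D_{k-1}$ is drawn with all of its vertices inside a horizontal interval $I$, then an edge $f=\{x_f,y_f\}$ whose top endpoint $x_f$ lies left of $I$ and whose bottom endpoint $y_f$ lies right of $I$ crosses \emph{every} edge $\{x,y\}$ of the block, because $x_f$ precedes $x$ on one layer while $y$ precedes $y_f$ on the other --- exactly the crossing condition. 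So a ``straight segment sweeping over the whole block'' does not meet the block in a controlled way; it meets all of it. More generally, any connected subgraph with at least one edge has vertices on both layers, hence its drawing is a connected set meeting both layer lines and separates the strip between them; a fan stem heading to a block that lies beyond $D_{k-1}$ must therefore cross not only the exposed stem edge of $D_{k-1}$ but also at least one edge of the nested interior (which is itself connected and meets both layers), and recursively at least one edge at every level of nesting. Consequently ``each fan edge meets each intervening block at a fresh stem segment'' and ``every pre-existing edge receives at most one additional crossing'' cannot follow from nesting plus a single exposed edge: the stem must \emph{weave} through the block with its vertices interleaved among the block's vertices, and one must then track exactly which edges, carrying how many inherited crossings, it hits. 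That accounting is precisely what you flag as ``the delicate part'' and do not carry out, so the heart of the proof is absent.

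For contrast, the paper sidesteps recursion and nesting entirely. Its witness $W_k$ is not a tree but a wall: $k$ long horizontal paths drawn interleaved in parallel, so that each row edge crosses exactly $k-1$ row edges of other rows, plus rungs between consecutive rows and pendant hairs, each crossing exactly $k$ row edges; the crossing count is uniform and checked directly, with no inductive invariant. The pathwidth lower bound is then obtained not from tree branching (unavailable, since $W_k$ has cycles) but from a grid minor $G_k$ together with a node-searching argument. Your plan is not obviously hopeless --- for small $k$ one can indeed weave a stem through a drawn spider within budget, so trees do achieve pathwidth $k+1$ for, e.g., $k\leq 2$ --- but to push it through you would have to replace the false invariant by an explicit inductive description of the drawing's slack profile (which edges of $D_{k-1}$ can still absorb a crossing and where they sit geometrically), and it is far from clear that such an invariant can be maintained; the paper's flat, non-recursive construction exists precisely to avoid this difficulty.
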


For $k = 0$, the path consisting of two vertices clearly satisfies the conditions.
Hence, in the following we show the claim for the case where $k \geq 1$.
To this end, we first construct a grid-like graph $G_k$ with pathwidth $k + 1$.
We then split its vertices so that the resulting graph $W_k$ admits a 2-layer $k$-planar drawing, preserving its pathwidth.

For an integer $k \geq 1$, let $G_k$ be a graph with vertex set $V_k = [k + 2] \times [3k + 6]$ and edge set $E_k = \{\{(r, c), (r + 1, c)\} \mid  r \in [k + 1], c \in [3k + 6]\} \cup \{\{(r, c), (r, c + 1)\} \mid r \in [k], c \in [3k + 5]\}$; see \cref{fig:G_k}.
For $r \in [k+2]$, we call the set of vertices $\{(r, c) \mid c \in [3k + 6]\}$ \emph{row} $r$.
Similarly, for $c \in [3k + 6]$, we call the set of vertices $\{(r, c) \mid r \in [k + 2]\}$ \emph{column} $c$.
We call an edge \emph{a row edge} (\emph{a column edge}) if the endpoints are in the same row (column).

When analyzing a search strategy on $G_k$, we say that a row (column) is \emph{clean} if all the row (column) edges on the row (column) are clean.

\begin{figure}[h]
    \centering
    \includegraphics[page=1]{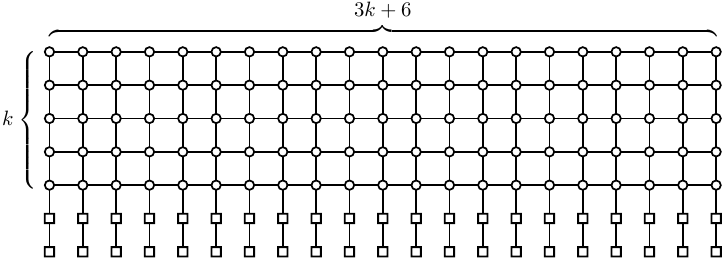}
    \caption{An illustration of $G_k$, which has $k + 2$ rows and $3k + 6$ columns.}
    \label{fig:G_k}
\end{figure}

\begin{lemma}\label{lem:pathwidth}
    For every $k \geq 1$, $\pw(G_k) = k + 1$.
\end{lemma}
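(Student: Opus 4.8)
By \cref{lem:ns-pw} it suffices to prove $\ns(G_k) = k+2$, and by \cref{lem:no-recontamination} I may restrict throughout to \emph{monotone} search strategies (those causing no recontamination). It is convenient to picture $G_k$ as the $k \times (3k+6)$ grid on rows $1, \dots, k$ together with, for each column $y$, a pendant path consisting of the column edges $(k,y)(k+1,y)$ and $(k+1,y)(k+2,y)$ hanging below (rows $k+1$ and $k+2$ carry no row edges). For the upper bound $\ns(G_k) \le k+2$ I would give an explicit left-to-right sweep. The invariant is to keep $k$ guards on the grid vertices $(1,y), \dots, (k,y)$ of the current column $y$, which form a separator of the grid. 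While this full column is guarded I clean its pendant path using the two spare guards: place a guard on $(k+1,y)$, cleaning $(k,y)(k+1,y)$, then on $(k+2,y)$, cleaning $(k+1,y)(k+2,y)$, reaching the peak of $k+2$ guards, and then remove both, which causes no recontamination since $(k+1,y)$ and $(k+2,y)$ are now incident only to clean edges. I then advance the separator from column $y$ to column $y+1$ in the standard staircase fashion, sliding one row at a time with a single spare guard; this cleans the intervening row edges and never recontaminates an already-clean pendant path, because $(k,y)$ is released only after $(k,y)(k,y+1)$ has been cleaned. The guard count peaks at $k+2$, attained only while cleaning a pendant path.

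The lower bound $\ns(G_k) \ge k+2$ is the crux. Suppose for contradiction that a monotone strategy uses at most $k+1$ guards. Two forcing phenomena are available. First, since $(k+2,y)$ is a leaf, at the step when $(k+1,y)(k+2,y)$ is cleaned both $(k+1,y)$ and $(k+2,y)$ must be guarded; moreover monotonicity pins a guard to $(k,y)$ throughout the entire interval between the cleaning of $(k,y)(k+1,y)$ and the moment all edges incident to $(k,y)$ are clean, for otherwise the clean column edge $(k,y)(k+1,y)$ would be recontaminated through $(k,y)$. Second, the $k$ rows of the grid are vertex-disjoint paths spanning all $3k+6$ columns, so any guard set separating a clean grid edge from a contaminated grid edge has size at least $k$. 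The plan is to locate a single time at which both phenomena occur at once: the pendant path of some column $y^{*}$ is being cleaned, placing two guards on rows $k+1, k+2$, while the grid still possesses both a clean and a contaminated row edge, forcing $k$ further guards on rows $1, \dots, k$ that are disjoint from the two tail guards. This totals $k+2$ guards, the desired contradiction.

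The main obstacle is exactly this coordination. A priori the strategy might clean every pendant path while the grid is still entirely contaminated, or clean the grid first and the tails last, so that no tail is \emph{active} during the grid's mid-sweep. I expect to rule this out by exploiting that $3k+6$ is far larger than $k+2$ together with the forced-guard interval on $(k,y)$: cleaning the pendant path of column $y$ pins a guard to $(k,y)$ until $(k,y)$ is surrounded by clean edges, so cleaning too many pendant paths before the grid is processed would pin more than $k+1$ guards simultaneously. Ordering the columns by the time their pendant path is cleaned and examining a median column should produce a time at which the grid is genuinely mid-sweep while a tail is active. Making this selection precise, and verifying that the separator it yields has size $k$ and is disjoint from the two tail guards, is the technical heart of the argument.
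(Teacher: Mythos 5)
Your upper bound is correct and is essentially the paper's sweep (guard a column, clean its pendant tail with the two spare guards, advance). The lower bound, however, contains a genuine gap, which you yourself flag by deferring ``the technical heart'': you never actually produce the moment at which the two forcing phenomena coincide, and the one bridging claim you do state is false. The claim that ``any guard set separating a clean grid edge from a contaminated grid edge has size at least $k$'' fails in general: if the only clean grid edge is $\{(1,1),(1,2)\}$ and everything else is contaminated, then guards on its two endpoints already separate clean from contaminated, so two guards suffice. The usable statement is \emph{per row}: in a monotone strategy, any row (or column) that contains both a clean and a contaminated edge must carry a guard, since an unguarded vertex incident to both would cause recontamination. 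So what you must exhibit is a step at which \emph{each} of the rows $1,\dots,k$ individually contains both a clean and a contaminated edge, while simultaneously some pendant edge $\{(k+1,c),(k+2,c)\}$ is being cleaned (placing two guards on rows $k+1,k+2$). Nothing in your proposal constructs such a step; the ``median pendant-cleaning time'' heuristic and the pinning of a guard at $(k,y)$ are plausible starting points but are not developed into an argument.

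The paper manufactures this step by counting \emph{clean columns}, not by ordering pendant-cleaning times. First, no row in $[k]$ can be completely cleaned while at most $2k+4$ columns are clean: with at most $k+1$ guarded columns, $(3k+6)-(2k+4)-(k+1)=1$ column would be unguarded with all edges contaminated, and it would recontaminate the finished row at their unguarded intersection. Second, once $k+2$ columns are clean, every row in $[k]$ must already contain a clean edge, since otherwise all $k+2$ intersections of that all-contaminated row with the clean columns would need guards. Together these show that whenever the number of clean columns lies in $[k+2,2k+4]$, every row in $[k]$ has both clean and contaminated edges, pinning $k$ guards to rows $1,\dots,k$. Finally, let $c_i$ be the $i$-th column to become clean; when $c_{k+2}$ becomes clean, at most $k+1$ of the $k+2$ columns $c_{k+3},\dots,c_{2k+4}$ can be partly clean (each partly clean column holds a guard), so some column $c$ among them has no clean edge yet. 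Its pendant edge $\{(k+1,c),(k+2,c)\}$ is therefore cleaned at a later step, at which the number of clean columns is still at most $2k+4$; that step needs guards on $(k+1,c)$ and $(k+2,c)$ in addition to the $k$ pinned ones, giving $k+2$ guards, a contradiction. These counting observations, and the selection of $c$ by column-cleaning order, are exactly the missing content of your ``coordination'' step.
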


\begin{proof}
    It is not difficult to see that $G_k$ is a minor of the $(k + 1) \times (6k + 12)$ grid, which has pathwidth $k+1$~\cite[Theorem 4.1]{DBLP:journals/dam/EllisW08}.
    As pathwidth is minor-monotone~\cite[Lemma 16]{DBLP:journals/tcs/Bodlaender98}, $\pw(G_k) \leq k + 1$ follows.
    Note also that $\pw(G_k) \geq k$ is clear as $G_k$ contains as a subgraph the $k \times k$ grid, which has pathwidth $k$~\cite[Theorem 4.1]{DBLP:journals/dam/EllisW08}.
    Hence, we have $\pw(G_k) \in \{k, k + 1\}$ and it suffices to disprove $\pw(G_k) = k$.

    Assume for contradiction that $\pw(G_k) = k$.
    Then, there exists a search strategy $S$ with cost $k + 1$.
    We further assume that $S$ does not cause recontamination by \cref{lem:no-recontamination}, and employ the following observation.
    This is almost the same as \cite[Observation 3.2]{DBLP:journals/dam/EllisW08}.

    \begin{observation}\label{obs:partly-cleaned}
        If a row $r \in [k]$ has both contaminated and clean edges, then there must be at least one guard on the row $r$.
        This property also holds for every column $c \in [3k+6]$.
    \end{observation}

    First, observe that in the search strategy $S$, none of the rows $1, \dots, k$ can have become clean unless at least $2k + 5$ columns are already clean.
    Otherwise, there are at most $2k + 4$ clean columns and at most $k + 1$ columns with a guard on them, leaving at least one column that is neither clean nor has a guard on it.
    By \cref{obs:partly-cleaned}, this column has only contaminated edges and would hence recontaminate the row, contradicting the assumption that $S$ causes no recontamination.

    Next, observe that once $k + 2$ columns become clean, each of the rows $1, \dots, k$ must contain a clean edge in $S$.
    Otherwise, there exists a row with all its edges being contaminated.
    Hence, to prevent recontamination, we must place guards on at least $k + 2$ intersections with the clean columns, which contradicts the cost of $k + 1$.

    Combining the above two observations and \cref{obs:partly-cleaned}, if the number of clean columns is in $[k + 2, 2k + 4]$, there must be at least one guard on each of the rows $1, \dots, k$.
    Let $c_{i}$ denote the $i$-th column to become clean.
    Note that two columns cannot become clean in the same turn and hence this is uniquely determined.
    By \cref{obs:partly-cleaned}, when $c_{k + 2}$ becomes clean, at least one of the $k + 2$ columns, $c_{k + 3}, c_{k + 4}, \dots, c_{2k + 4}$, has no clean edge.
    Let $c$ be such a column.
    Consider the turn when the edge $\{(k + 1, c), (k + 2, c)\}$ becomes clean.
    Right after this turn, there are still at most $2k + 4$ clean columns, and hence at least $k$ guards are placed on the $k$ other rows.
    This implies that there are at least $k + 2$ guards placed, which contradicts the cost of $k + 1$.
\end{proof}

Next, for every $k \geq 1$, we construct a wall-like 2-layer $k$-planar graph $W_k$ containing $G_k$ as a minor.
Since pathwidth is minor-monotone~\cite[Lemma 16]{DBLP:journals/tcs/Bodlaender98}, $\pw(W_k) \geq k + 1$ follows from \cref{lem:pathwidth}.
Hence, showing the existence of such graphs is sufficient to prove \cref{thm:main}.
Note that $\pw(W_k) \leq k + 1$ follows when $W_k$ is a 2-layer $k$-planar graph.

We first initialize $W_k$ as a graph consisting only of $k$ rows with $\ell = 4k (3k + 6)$ vertices each; namely, we let $W_k$ be a graph with vertex set $\{ (r, c) \mid r \in [k], c \in [\ell]\}$ and edge set $\{ \{(r, c), (r, c + 1)\} \mid r \in [k], c \in [\ell - 1]\}$.
We then add edges corresponding to the column edges of $G_k$.
For every $c \in [3k + 6]$, we apply the following operations to $W_k$ (see \cref{fig:W_k}):
\begin{enumerate}
    \item for every $r \in [k-1]$, add an edge $\{(r, 4k(c-1) + 4r - 3), (r + 1, 4k(c-1) + 4r - 2)\}$,
    \item\label{item:add-two-vertices} add two vertices $(k + 1, 4kc - 2), (k + 2, 4kc - 1)$, and
    \item\label{item:add-two-edges} add two edges $\{(k, 4kc - 3), (k + 1, 4kc - 2)\}, \{(k + 1, 4kc - 2), (k + 2, 4kc - 1)\}$.
\end{enumerate}
We call a subgraph consisting of the two vertices and the two edges added in Steps \ref{item:add-two-vertices} and \ref{item:add-two-edges} for some $c$ \emph{a hair}.
We define rows and columns for $W_k$ similarly: we call $\{(r, c) \mid c \in [\ell]\}$ row $r$, and $\{(r, c) \mid r \in [k + 2], (r, c) \in V(W_k)\}$ column $c$.

\begin{figure}[h]
    \centering
    \includegraphics[page=2]{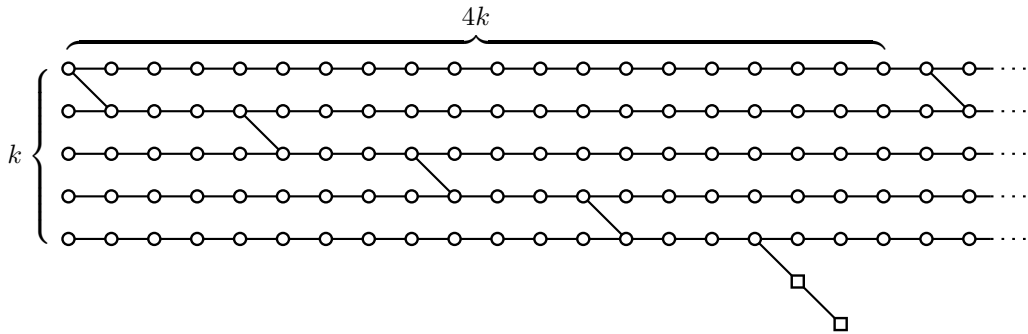}
    \caption{An illustration of $W_k$. The same pattern appears every $4k$ columns.}
    \label{fig:W_k}
\end{figure}

Now we show that the graph $W_k$ obtained in this manner satisfies the claimed conditions, which completes the proof of \cref{thm:main}.

\begin{lemma}
    For every $k \geq 1$, $W_k$ contains $G_k$ as a minor.
\end{lemma}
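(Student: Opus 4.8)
The plan is to exhibit an explicit minor model of $G_k$ inside $W_k$, that is, to assign to each vertex $(x,y)$ of $G_k$ a connected \emph{branch set} $B_{x,y} \subseteq V(W_k)$ so that distinct branch sets are vertex-disjoint and each edge of $G_k$ is witnessed by an edge of $W_k$ joining the two relevant branch sets. Since $G_k$ has $k+2$ rows but $W_k$ was initialized with only $k$ long rows (plus hairs), the natural guess is that rows $1,\dots,k$ of $G_k$ map to rows $1,\dots,k$ of $W_k$, while the two extra rows $k+1$ and $k+2$ of $G_k$ are realized by the two hair vertices $(k+1,4ky-2)$ and $(k+2,4ky-1)$ added for each $y$.

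Concretely, I would define the branch sets by column: for a fixed $y \in [3k+6]$, let $B_{x,y}$ for $x \in [k]$ be an appropriate interval of consecutive vertices along row $x$ of $W_k$ — specifically a block of the $4k$-vertex pattern that contains the endpoints used by the edges added in step~1 and in step~\ref{item:add-two-edges}. The diagonal edges $\{(x, 4k(y-1)+4x-3),(x+1,4k(y-1)+4x-2)\}$ from step~1 will serve as the witnesses for the column edges $\{(x,y),(x+1,y)\}$ of $G_k$ for $x \in [k-1]$, and the two hair edges from step~\ref{item:add-two-edges} will witness the column edges $\{(k,y),(k+1,y)\}$ and $\{(k+1,y),(k+2,y)\}$. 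For $x \in \{k+1,k+2\}$ I would set $B_{x,y}$ to be the single hair vertex. The horizontal path edges of $W_k$ within each row supply both the internal connectivity of each $B_{x,y}$ and, across consecutive columns $y$ and $y+1$, the witnesses for the row edges $\{(x,y),(x,y+1)\}$ of $G_k$ for $x \in [k]$; note $G_k$ only has row edges in rows $1,\dots,k$, which matches exactly the rows of $W_k$ carrying long horizontal paths.

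The key steps, in order, are: first, partition each row of $W_k$ into $3k+6$ consecutive intervals indexed by $y$, chosen so that the interval for column $y$ contains exactly the vertices incident to the step-1 and step-\ref{item:add-two-edges} edges attached at that $y$; second, verify each branch set is connected (the row intervals via the path edges, the hair singletons trivially); third, verify disjointness, which follows from the intervals being disjoint and the hair vertices being distinct; and fourth, check that every edge of $G_k$ has the claimed witnessing edge in $W_k$ and that the index arithmetic $4k(y-1)+4x-3$, $4k(y-1)+4x-2$, $4ky-3$, $4ky-2$, $4ky-1$ all fall within the correct intervals and within $[\ell]$ where $\ell = 4k(3k+6)$.

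The main obstacle I expect is the bookkeeping of the index arithmetic: I must confirm that the stride $4k$ per column leaves enough room for the staircase of $k-1$ diagonal step-1 edges (which advance by $4$ in the $y$-coordinate as $x$ increases) together with the two hair edges, and that consecutive column-blocks do not overlap so that disjointness genuinely holds. In particular I must check that the right endpoint used within column $y$ (namely $4ky-1$ from the hair) precedes the left endpoint used within column $y+1$ (namely $4ky+1$), so the row-path segments between successive blocks are available to witness the row edges of $G_k$ without forcing any branch set to overlap. Once the interval boundaries are fixed and these inequalities are verified, the rest is a routine check against the definition of a minor.
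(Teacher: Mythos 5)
Your proposal is correct and is essentially the paper's own proof: the paper contracts exactly the branch sets you describe, namely $S_{x,y} = \{(x,y') \mid y' \in [4k(y-1)+1,\, 4ky]\}$ for $x \in [k]$, with the two hair vertices serving as singleton branch sets for rows $k+1$ and $k+2$, and the step-1 diagonal edges and hair edges witnessing the column edges just as you say. The index checks you flag as the "main obstacle" all go through with this choice of intervals (every endpoint used in column block $y$ lies in $[4k(y-1)+1, 4ky]$), which is why the paper dispatches them with "it is clear."
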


\begin{proof}
    For $r \in [k]$ and $c \in [3k + 6]$, let $S_{r, c} \subseteq V(W_k)$ be the vertex set $\{(r, c') \mid c' \in [4k(c-1) + 1, 4kc]\}$.
    By contracting $S_{r, c}$ into a single vertex $s_{r, c}$ for every $r, c$, we obtain a graph isomorphic to $G_k$. Note that $s_{r, c}$ corresponds to $(r, c) \in V(G_k)$.
\end{proof}

\begin{lemma}
    For every $k \geq 1$, $W_k$ is a 2-layer $k$-planar graph.
\end{lemma}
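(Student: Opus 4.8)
The plan is to write down one explicit 2-layer drawing of $W_k$ and bound the number of crossings on each edge by $k$. Every edge of $W_k$ (a row edge, an edge added in step~1, or a hair edge) joins two vertices whose second coordinates differ by exactly $1$; hence the parity of the second coordinate is a proper $2$-colouring, and since $W_k$ is connected this is its unique bipartition. I would therefore place the odd-second-coordinate vertices on one layer and the even ones on the other, and order each layer by the rule that $(x,y)$ precedes $(x',y')$ whenever $(y,x)$ precedes $(y',x')$ lexicographically, i.e.\ sort primarily by the second coordinate and break ties by the row index. This is injective on each layer, so it is a valid 2-layer drawing, and only the crossing bound remains.

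The useful feature of this drawing is that every edge is \emph{short}: its two endpoints have consecutive second coordinates, so two edges can cross only when their second-coordinate intervals intersect, which makes every crossing local. I would first treat the step-1 (call them \emph{diagonal}) edges and the hair edges. For the diagonal edge $\{(x,p),(x+1,p+1)\}$, the only edges whose intervals meet $[p,p+1]$ are row edges, and comparing lexicographic positions shows it crosses exactly the row edges of the rows of index $\le x$ lying on $[p+1,p+2]$ and exactly those of index $>x$ lying on $[p-1,p]$, for a total of $x+(k-x)=k$. Since consecutive emanation points are four apart in the second coordinate, no two diagonal or hair edges cross one another (the only ones with overlapping intervals, the two hairs of a single block, are incident). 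The same local computation gives each hair edge exactly $k$ crossings, all with row edges.

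The delicate case, and the step I expect to be the main obstacle, is the row edges, because the $k$ rows are drawn as nearly parallel zig-zags and already cross one another: a direct comparison shows that a row edge in row $x$ on $[y,y+1]$ crosses exactly the analogous row edge of each of the $x-1$ rows of smaller index and each of the $k-x$ rows of larger index, contributing exactly $k-1$ crossings. This leaves a single unit of slack, so the crux is to show that each row edge $e$ is crossed by \emph{at most one} diagonal or hair edge. A staircase edge crossing $e$ on $[y,y+1]$ must have its lower endpoint in $\{y-1,y,y+1\}$; the diagonal and first-hair emanation points lie at second coordinates $\equiv 1 \pmod 4$ and the second-hair edges at $\equiv 2 \pmod 4$, and three consecutive integers contain at most one representative of each residue, so at most one candidate of each kind meets $e$. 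A short check then shows that a $\equiv 1$ candidate and a $\equiv 2$ candidate never both cross $e$: whenever both meet $e$ (which happens only for $y \equiv 1,2 \pmod 4$) the second-hair candidate fails to cross it, since both of its endpoints lie to the right of every row vertex at the shared second coordinate. Hence at most one staircase edge crosses $e$, giving at most $k-1+1=k$ crossings. The extreme blocks and columns only lose crossings and need no separate treatment, so $W_k$ is 2-layer $k$-planar.
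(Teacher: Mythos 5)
Your proposal is correct and takes essentially the same approach as the paper: the identical drawing (bipartition by parity of the second coordinate, each layer ordered lexicographically by $(y,x)$), the same split into row edges versus non-row edges with the same counts (exactly $k$ row-edge crossings per non-row edge, $k-1$ mutual crossings per row edge), and the same mod-$4$ locality argument showing each row edge meets at most one non-row edge. The only difference is presentational: your residue-class case analysis for the last step is slightly more explicit than the paper's.
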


\begin{proof}
    Let $V_1 \subseteq V(W_k)$ be the vertex set $\{(r, c) \in V(W_k) \mid c \equiv 1 \pmod 2\}$ and $V_2 = V(W_k) \setminus V_1$.
    Observe that $(V_1, V_2)$ is a bipartition of $V(W_k)$.
    For $i \in \{1, 2\}$, let $\pi_i$ be the linear order of $V_i$ obtained by sorting $V_i$ in lexicographical order, where we define the key for a vertex $(r, c) \in V_i$ as $(c, r)$.
    We then claim that $\pi = (\pi_1, \pi_2)$ is a 2-layer $k$-planar drawing of $W_k$.
    Note that in a 2-layer drawing two edges do not cross more than once.
    Hence it suffices to show that every edge crosses at most $k$ other edges in $\pi$.

    First consider the subdrawing of $\pi$ induced by the row edges; see \cref{fig:rows}.
    In this subdrawing, a row edge $\{(r, c), (r, c + 1)\}$ crosses $k-r$ edges between columns $c-1, c$ and $r-1$ edges between columns $c+1, c+2$.
    Hence, this subdrawing is $(k-1)$-planar.

    \begin{figure}[h]
        \centering
        \includegraphics[page=3]{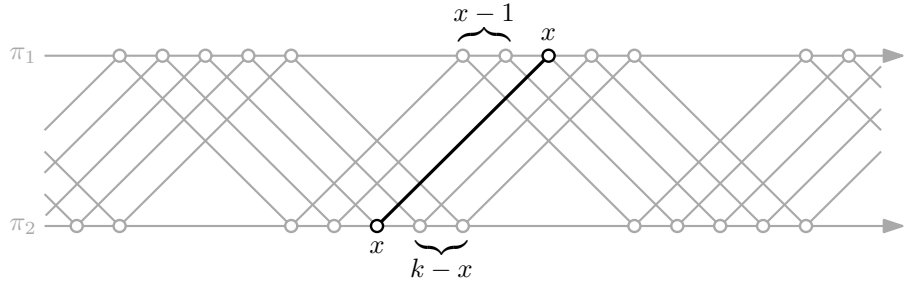}
        \caption{A part of the subdrawing of $\pi = (\pi_1, \pi_2)$ induced by the row edges.}
        \label{fig:rows}
    \end{figure}

    Next, we show that a non-row edge crosses at most $k$ other edges in $\pi$.
    There are two types of non-row edges: edges connecting two consecutive rows among rows $1, \dots, k$ (see \cref{fig:column-edge-normal}) and edges of the hairs attached to row $k$ (see \cref{fig:column-edge-hair}).
    Observe that non-row edges do not cross pairwise, since for every fourth column, only one of an edge of the first type or a single hair appears.
    Hence, as in \cref{fig:column-edges}, a non-row edge crosses at most $k$ edges regardless of its type.
    Note that we place a hair vertex immediately after the other vertices in the same column.

    \begin{figure}[h]
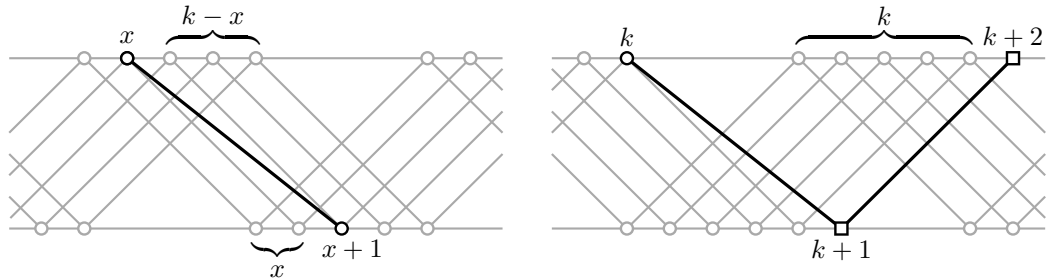

        \begin{subfigure}[b]{.49\textwidth}
            \centering
            \includegraphics[page=4]{figures.pdf}
            \caption{A non-row edge connecting two of rows $1, \dots, k$.}
            \label{fig:column-edge-normal}
        \end{subfigure}
        \hfill
        \begin{subfigure}[b]{.49\textwidth}
            \centering
            \includegraphics[page=5]{figures.pdf}
            \caption{Two non-row edges forming a hair.}
            \label{fig:column-edge-hair}
        \end{subfigure}
        \caption{Two types of non-row edges.}
        \label{fig:column-edges}
    \end{figure}

    Lastly, we bound the number of crossings on a row edge in $\pi$.
    Consider a non-row edge of the first type.
    It is between columns $4t+1$ and $4t+2$ for some $t$.
    As in \cref{fig:column-edge-normal}, it crosses only row edges between two columns, $4t$ and $4t+1$, or $4t+2$ and $4t+3$.
    Next, consider a hair.
    It is attached to vertex $(k, 4t+1)$ for some $t$.
    Its first edge, namely the edge between rows $k$ and $k+1$, crosses only edges between columns $4t+2$ and $4t+3$.
    Similarly, its second edge crosses only edges between columns $4t+3$ and $4t+4$.
    Since these $t$'s are distinct, we can show that every row edge crosses at most one non-row edge in $\pi$ as follows.
    \begin{itemize}
        \item Consider a row edge between columns $4t + 0$ and $4t + 1$. Among non-row edges, this crosses only (if exists) the single edge of the first type between columns $4t + 1$ and $4t + 2$.
        \item Consider a row edge between columns $4t + 1$ and $4t + 2$. This does not cross any non-row edges.
        \item Consider a row edge between columns $4t + 2$ and $4t + 3$. Among non-row edges, this crosses only the single edge of the first type between columns $4t + 1$ and $4t + 2$ or otherwise the first edge of the hair attached to vertex $(k, 4t + 1)$.
        \item Consider a row edge between columns $4t + 3$ and $4t + 4$. Among non-row edges, this crosses only (if exists) the second edge of the hair attached to vertex $(k, 4t + 1)$.
    \end{itemize}
    Recall that every row edge crosses at most $k-1$ row edges.
    Hence, every row edge crosses at most $k$ other edges in $\pi$.
\end{proof}

\section{Conclusions}

In this paper, we gave a family of 2-layer $k$-planar graphs to show that the upper bound $k+1$ on the pathwidth of 2-layer $k$-planar graphs is sharp.

For future work, filling the gap in treewidth bounds for outer $k$-planar graphs ($1.5k+2$~\cite{DBLP:conf/gd/FirmanGKO024} and $1.5k+0.5$~\cite{DBLP:conf/gd/Pyzik25}) would be an interesting open problem to be revisited.
The lower bound of $1.5k+0.5$~\cite{DBLP:conf/gd/Pyzik25} shares a base idea with ours.
It is achieved by nicely arranging the vertices of a graph that contains two large grid graphs.
Hence, the splitting idea used in \cref{thm:main}, splitting vertices sufficiently to untangle parts with many crossings, might be helpful to improve the lower bound.

\bibliography{ref}

\end{document}